\begin{document}


\title{Subspace Stabilization Analysis for Non-Markovian Open Quantum Systems}


\author{Shikun Zhang}
\email[]{daxiayusuozhang@126.com}
\affiliation{School of Automation, Beijing Institute of Technology, Beijing, 100081 ,China}

\author{Kun Liu}
\email[]{kunliubit@bit.edu.cn}
\affiliation{School of Automation, Beijing Institute of Technology, Beijing, 100081 ,China}

\author{Daoyi Dong}
\affiliation{School of Engineering and Information Technology, University of New South Wales, Caberra, ACT, 2600,  Australia}

\author{Xiaoxue Feng}
\email[]{fengxiaoxue@bit.edu.cn}
\affiliation{School of Automation, Beijing Institute of Technology, Beijing, 100081 ,China}

\author{Feng Pan}
\affiliation{School of Automation, Beijing Institute of Technology, Beijing, 100081 ,China}



\date{\today}

\begin{abstract}
Studied in this article is non-Markovian open quantum systems parametrized by Hamiltonian $H$, coupling operator $L$, and memory kernel function $\gamma$, which is a proper candidate for describing the dynamics of various solid-state quantum information processing devices. We look into the subspace stabilization problem of the system from the perspective of dynamical systems and control. The problem translates itself into finding analytic conditions that characterize invariant and attractive subspaces. Necessary and sufficient conditions are found for subspace invariance based on algebraic computations, and sufficient conditions are derived for subspace attractivity by applying a double integral Lyapunov functional. Mathematical proof is given for those conditions and a numerical example is provided to illustrate the theoretical result.
\end{abstract}


\maketitle

\section{Introduction}
Human beings are now in a century when we can not only observe and describe quantum systems, but also alter and control them so as to harness their power unparalleled by classical resources. A promising application lies in Quantum Information Processing (QIP), where exponentially faster computation and provably safer communication are possible to be realised \cite{Nielsen2011Quantum}. In the recent decade, effective QIP devices have been known including silicon photonic crystals \cite{Amir2018Integrating}, trapped ions \cite{Bock2018High}, and superconducting quantum circuits \cite{Shankar2013Autonomously}.

``Quantum information" in the digital world must be represented by, stored in and manipulated through actual physical systems, whose states evolve according to the laws of quantum mechanics and even quantum field theory. Therefore, rigorouly analysing and actively tuning the dynamics of those systems are among the fundamental building blocks of quantum information engineering. This coincides with the basic objective of Systems and Control science, which is to predict the evolution of dynamical systems and make them behave in the way we desire. As a result, quantum control (cybernetics) \cite{Dong2009Quantum, Altafini2012Modeling, Zhang2017Quantum}, born at the intersection of quantum physics, control science and applied mathematics, becomes a useful tool to achieve successful QIP and other quantum engineering applications.

	In this work, we take an in depth look into the subspace stabilization problem which lies in the realm of Systems and Control theory and finds applications in a wide range of QIP problems, e.g., initialization of qubit, generation of entangled states and realization of decoherence-free quantum information. This problem was first studied in \cite{Ticozzi2008Quantum}, where it was analysed in the framework of subspace invariance and attractivity. The authors in \cite{Ticozzi2008Quantum} presented a set of algebraic conditions that characterize invariant and attractive subspaces. Moreover, in \cite{Ticozzi2009Analysis}, sufficient and necessary conditions were derived for invariance and attractivity as opposed to mostly necessary conditions in the previous paper. As subsequent works, the authors in \cite{Ticozzi2010Stabilizing} constructively designed system parameters $(H,L)$ to stabilize generic quantum states, and Ref. \cite{Ticozzi2012Hamiltonian} introduced a computable algorithm to verify those previously proposed conditions and analysed the speed of convergence.

However, the subspace stabilization problem is, up to date, only covered for Lindblad systems \cite{Breuer2002The}. Among the several assumptions that lead to the Lindblad master equation lies the Markovian assumption, which requires that environmental correlations be sufficiently short compared with the system's characteristic time scale. This results in a memoryless, or in other words, Markovian, system where information only flows in one direction. Yet this assumption does not apply to all scenarios. For instance, the modelling of mesoscopic quantum circuits, where field propagation time delay and non-classical input states are considered, often sees the break down of Markovian assumption \cite{Combes2016The}. It seems only natural to extend the analysis of subspace stabilization into the non-Markovian regime.

In the recent decade, non-Markovian quantum systems have attracted increasing interest from the academia. A large amount of work has been done on deriving proper mathematical models, defining and measuring non-Markovianity, and analysing complete positivity; see \cite{Vega2017Dynamics} for an excellent review. However, very few results have addressed the properties of system dynamics given a non-Markovian master equation, which is a topic of major focus for Systems and Control theorists. Therefore, we would like to study the subspace stabilization problem for non-Markovian quantum systems as an investigation of quantum dynamics with memory and for achieving QIP tasks on physical devices with significant non-Markovian effects.

The master equation on which our work bases was derived in \cite{Zhang2013Non} for non-Markovian input-output networks. It applies to atom-like structures in radiation fields, for example, the superconducting ciruit and microwave system. The resulting equation is a time-convolutional one where the derivative of current state depends on all history states and environmental interactions, as opposed to its Markovian (Lindblad) counterpart where only the present state matters. The mathematical object behind time convolutional non-Markovian equations is the Integro-Differential System, see \cite{book:1409764}, from which we have taken a page to help our discussion.

The rest of the paper is organized as follows. In Section II, we introduce the non-Markovian master equation to be studied and define the scope of system parameters to our interest. This is followed by Section III, where the definition of invariant subspaces is given and its \textit{iff} conditions are provided and proved. Section IV presents the definition and sufficient conditions of subspace attractivity, and Section V gives an example of a three level system followed by numerical simulation. The article is concluded by Section VI, which sums up the work and suggests future directions.

\section{Non-Markovian System Model}
In this article, we study non-Markovian open quantum systems described by the following time convolutional master equation, which was derived in \cite{Zhang2013Non} by applying the Born approximation.
\begin{eqnarray}
\dot{\rho}=&-i[H,\rho]+\int_0^t \{\gamma^*(t-\tau)[L\rho(\tau),L_H^{\dagger}(\tau-t)]\nonumber\\
&+\gamma(t-\tau)[L_H(\tau-t),\rho(\tau)L^\dagger] \}d\tau, \label{model}
\end{eqnarray}
where
\begin{equation}
L_H(t)=e^{iHt}Le^{-iHt}.
\end{equation}
There are three parameters in the system model. The Hermitian operator $H$ stands for system Hamiltonian, which generates internal dynamics for the system. Meanwhile, $L$ represents the coupling operator, which describes the interaction interface between the quantum system and its environment. Finally, the memory kernel function $\gamma(t)$ demonstrates the non-Markovianity of the system by weighing the influence of all history system-environment interactions. It is straightforward to verify that this master equation reduces to the well-known and extensively studied Lindblad master equation:
\begin{equation}
\dot{\rho}=-i[H,\rho]+2L\rho L^\dagger-L^{\dagger}L\rho-\rho L^{\dagger}L.
\end{equation}
when $\gamma(t)=\delta(t)$. In this scenario, memoryless kernel function leads to memoryless, or in other words, Markovian, dynamics.

For the sake of simplicity, only real kernel functions are considered in this work. It is also assumed that $\gamma(t)\geq0$, $\gamma(0)\neq0$ and $\gamma\in L^1[0,\infty)$. More restrictions on $\gamma$ may need to be considered to guanrantee complete positivity of the non-Markovian master equation. However, deriving such conditions remains a rather unexplored problem and is beyond the scope of this paper. In fact, complete positivity has been proven in the case of Lorentz spectrum quantum noises (exponentially decaying memory kernels) \cite{Zhang2013Non}, which indicates that completely positive dynamics can be induced by a set of kernel functions that subsumes the exponential family. Therefore, we make a further assumption that $\gamma$ belongs to that set.

Given that the open system evolves under (\ref{model}), its subspace stabilization problem is divided into invariance and attractivity analysis, which will be discussed separately in the following sections.

\section{Subspace Invariance}
This section involves the first half of subspace stabilization problem, subspace invariance. We give a definition of invariant subspaces and present necessary \& sufficient conditions that characterize them. 

Let $\mathcal{H}_I$ be a finite dimensional Hilbert space, and $\mathcal{D}(\mathcal{H}_I)$ be the set of all semi-positive, trace-one, hermitian linear bounded operators on $\mathcal{H}_I$ (density matrices), which forms the state space for quantum system (\ref{model}). The Hilbert space admits the following decomposition,
\begin{equation}
\mathcal{H}_I=\mathcal{H}_S\oplus \mathcal{H}_R,
\end{equation}
where $\mathcal{H}_S=\text{span}\{|\varphi_j^S\rangle\}_{j=0}^m$ and $\mathcal{H}_R=\text{span}\{|\psi_k^R\rangle\}_{k=0}^n$. All basis vectors are orthonormal. According to this subspace decomposition, each operator in (1) has a block matrix representation given this set of bases. We denote those matrices as follows.
\[
H=\left(
     \begin{array}{cc}
       H_S & H_P\\
       H_Q & H_R\\
     \end{array}
   \right),
\quad
L=\left(
     \begin{array}{cc}
       L_S & L_P\\
       L_Q & L_R\\
     \end{array}
   \right)
\]
\[
\rho(t)=\left(
     \begin{array}{cc}
       \rho_S(t) & \rho_P(t)\\
       \rho_Q(t) & \rho_R(t)\\
     \end{array}
   \right),
\]
\[
L_H(t)=\left(
     \begin{array}{cc}
       L_H^S(t) & L_H^P(t)\\
       L_H^Q(t) & L_H^R(t)\\
     \end{array}
   \right).
\]
The hermicity of $H$ and $\rho$ implies that $H_Q=H_P^\dagger$ and $\rho_Q(t)=\rho_P^\dagger(t)$.

We now define what an invariant subspace is. It can be verified that our definition is equivalent to that in \cite{Ticozzi2008Quantum} and \cite{Ticozzi2009Analysis}. However, we simplify the narration in those works by suppressing the notion of quantum subsystems.
\newcounter{definition}
\newcounter{theorem}
\newcounter{corollary}
\newcounter{lemma}
\newtheorem{mydef}[definition]{Definition}
\begin{mydef}[Subspace Invariance] Let the quantum system evolve under (1). $\mathcal{H}_S$ is an invariant subspace if the following condition is satisfied:
\begin{align*}
&\mbox{if}\qquad\rho(0)=\left(
     \begin{array}{cc}
       \rho_S^0 & 0\\
       0 & 0\\
     \end{array}
   \right),\quad \forall \rho_S^0\in \mathcal{D}(\mathcal{H}_S),\\  
&\mbox{then}\qquad\rho(t)=\left(
     \begin{array}{cc}
       \rho_S(t) & 0\\
       0 & 0\\
     \end{array}
   \right),\quad \forall t\geq 0. 
\end{align*}
\end{mydef}
The following theorem completely characterizes invariant subspaces.
\newtheorem{mythm}[theorem]{Theorem}
\begin{mythm}[Subspace Invariance]
The following conditions (i),(ii),(iii) are necessary and sufficient for $\mathcal{H}_S$ to be an invariant subspace.

(i)
\[ 
H=\left(
     \begin{array}{cc}
       H_S & 0\\
       0 & H_R\\
     \end{array}
   \right);\]

(ii)
\[L=\left(
     \begin{array}{cc}
       L_S & L_P\\
       0 & L_R\\
     \end{array}
   \right);\]

(iii) Denote by $\rho_S(t;\rho_S^0)$ the trajectory, with initial value $\rho_S^0$, which satisfies the following integro-differential equation.
\begin{multline}
\dot{\rho}_S=-i[H_S,\rho_S]+\\
\int_0^t \gamma^*(t-\tau)[L_S\rho_S(\tau),L_H^{S\dagger}(\tau-t)]+\mbox{h.c.}d\tau,
\end{multline}
where
\begin{equation}
L_H^S(t)=e^{iH_St}L_Se^{-iH_St}.
\end{equation}
Then, $\forall \rho_S^0 \in \mathcal{D}(\mathcal{H}_S)$,
\begin{equation}
\int_0^t \gamma(t-\tau)\rho_S(\tau;\rho_S^0)L_S^{\dagger}L_H^P(\tau-t)d\tau=0.
\end{equation}
\end{mythm}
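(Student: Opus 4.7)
The plan is to prove Theorem 1 by treating sufficiency and necessity separately, in both cases reducing the master equation~(\ref{model}) to a set of block-matrix identities. A useful preliminary remark is that since $\gamma$ is real, the second integrand of~(\ref{model}) is the hermitian conjugate of the first, so I need only analyse the first integrand and obtain the $(2,1)$ and $(2,2)$ blocks by hermitian conjugation. My core tool is to expand $H$, $L$, $\rho$, $L_H$, and $L^\dagger$ in the prescribed block form and then equate each of the four blocks of $\dot{\rho}$ with the corresponding block of the RHS.

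For sufficiency I would assume (i)-(iii). Since (i) makes $e^{\pm iHt}$ block diagonal, combining with (ii) yields $L_H^Q(t)=0$, $L_H^S(t)=e^{iH_S t}L_S e^{-iH_S t}$, and analogous expressions for the other two blocks. Let $\rho_S(t)$ be the unique solution of~(5) with initial value $\rho_S^0$, and define
\[
\tilde{\rho}(t) = \begin{pmatrix} \rho_S(t) & 0 \\ 0 & 0 \end{pmatrix}.
\]
Substituting $\tilde{\rho}$ into the RHS of~(\ref{model}) and computing blockwise, the $(1,1)$ block reproduces~(5); the $(2,2)$ block is identically zero; the $(1,2)$ block collapses to $-\int_0^t \gamma(t-\tau)\rho_S(\tau)L_S^\dagger L_H^P(\tau-t)\,d\tau$, which vanishes by~(iii); and the $(2,1)$ block vanishes by hermitian conjugation. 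Hence $\tilde{\rho}$ satisfies~(\ref{model}) with the required initial datum, and uniqueness for the linear integro-differential equation~(\ref{model}) gives $\rho(t)=\tilde{\rho}(t)$, establishing invariance.

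For necessity I assume invariance and harvest conditions from the master equation at, and just above, $t=0$. Since the convolution vanishes at $t=0$, $\dot{\rho}(0)=-i[H,\rho(0)]$; its $(1,2)$ block is $i\rho_S^0 H_P$, so demanding $\dot{\rho}_P(0)=0$ for every $\rho_S^0\in\mathcal{D}(\mathcal{H}_S)$ (e.g., $\rho_S^0=I/\dim\mathcal{H}_S$) gives $H_P=0$, and hermiticity then yields (i). Differentiating~(\ref{model}) once more and applying the Leibniz rule at $t=0$ produces the boundary term $\gamma(0)\{2L\rho(0)L^\dagger - L^\dagger L\rho(0) - \rho(0)L^\dagger L\}$; its $(2,2)$ block is $2\gamma(0)L_Q\rho_S^0 L_Q^\dagger$, while $-i[H,\dot{\rho}(0)]$ has zero $(2,2)$ block since, after (i), both $H$ and $\dot{\rho}(0)$ are block diagonal. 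The assumption $\gamma(0)\neq 0$ combined with arbitrariness of $\rho_S^0$ then forces $L_Q=0$, proving (ii). Finally, with (i) and (ii) in force, the invariant trajectory $\rho(t)=\begin{pmatrix}\rho_S(t)&0\\0&0\end{pmatrix}$ reduces the $(1,2)$ block of the RHS of~(\ref{model}) to exactly the integral in~(iii), whose vanishing is then dictated by $\dot{\rho}_P(t)=0$; the $(1,1)$ block simultaneously reduces to~(5), confirming that $\rho_S$ is indeed its unique solution.

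I expect the main obstacle to be the necessity argument for (ii): the Leibniz differentiation of the convolution must be carried out carefully since the integrand depends on $t$ through both $\gamma(t-\tau)$ and $L_H^\dagger(\tau-t)$, and the vanishing of the $(2,2)$ block of the Hamiltonian contribution relies on already having (i) in hand. A secondary technical point in the sufficiency direction is invoking a uniqueness theorem for the integro-differential equation~(\ref{model}); this should follow from classical Volterra theory under $\gamma\in L^1[0,\infty)$ but deserves to be flagged explicitly. A final bookkeeping item is that each extracted relation is linear in $\rho_S^0$, so varying $\rho_S^0$ over $\mathcal{D}(\mathcal{H}_S)$ is needed to promote pointwise identities to the operator identities claimed.
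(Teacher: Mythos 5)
Your proposal is correct and follows essentially the same route as the paper: blockwise expansion of the master equation, extraction of $H_P=0$ and $L_Q=0$ from the vanishing of derivatives at $t=0$ using the arbitrariness of $\rho_S^0$ and $\gamma(0)\neq 0$, reduction of the off-diagonal block to the integral condition (iii), and sufficiency by verifying that the block-diagonal ansatz built from the solution of (5) solves the full equation. The only differences are cosmetic — you extract $H_P=0$ before $L_Q=0$ (your second-derivative boundary term $2\gamma(0)L_Q\rho_S^0L_Q^\dagger$ is exactly the paper's $\dot R(0)$), and you explicitly flag the Volterra uniqueness step that the paper leaves implicit, which is if anything a slight improvement in rigor.
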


\begin{proof}

\textbf{Necessity}.

Suppose $\mathcal{H}_S$ is an invariant subspace, then we have the following relationship according to Definition 1:
\begin{align*}
&\rho(t)=\left(
     \begin{array}{cc}
       \rho_S(t;\rho_S^0) & 0\\
       0 & 0\\
     \end{array}
   \right),\quad \forall t \geq 0, \quad \forall\rho_S^0 \in \mathcal{D}(\mathcal{H}_S);\\
&\dot{\rho}(t)=\left(
     \begin{array}{cc}
       \dot{\rho_S}(t;\rho_S^0) & 0\\
       0 & 0\\
     \end{array}
   \right)=\left(
     \begin{array}{cc}
       S(t) & P(t)\\
       Q(t) & R(t)\\
     \end{array}
   \right),\quad \forall t \geq 0.
\end{align*}
Hermitity of the state density matrix and its derivative imply that $Q(t)=P^{\dagger}(t)$.
We proceed to compute explicitly the S, P and R blocks.
\begin{multline}
S(t)=-i[H_S,\rho_S]\\
+\int_0^t \{ \gamma^*(t-\tau)[L_S\rho_S(\tau),L_H^{S\dagger}(\tau-t)]\\
-L_H^{Q\dagger}(\tau-t)L_Q\rho_S(\tau)\}+\mbox{h.c.}d\tau,
\end{multline}
\begin{multline}
P(t)=i\rho_SH_P+\int_0^t \gamma^*(t-\tau)L_S \rho_S(\tau)L_H^{Q\dagger}(\tau-t)\\
+\gamma(t-\tau)[L_H^S(\tau-t)\rho_S(\tau)L_Q^{\dagger}\\
-\rho_S(\tau)(L_S^{\dagger}L_H^P(\tau-t)+L_Q^{\dagger}L_H^R(\tau-t))] d\tau,
\end{multline}
\begin{multline}
R(t)=\int_0^t\gamma^*(t-\tau)L_Q\rho_S(\tau)L_H^{Q\dagger}(\tau-t)+\mbox{h.c.}d\tau.
\end{multline}
Since $R(t)\equiv 0$, then $\dot{R}(t)\equiv 0$, and $\dot{R}(0)=0$. Changing the integration variable yields:
\begin{equation}
R(t)=\int_0^t \gamma^*(\tau)L_Q\rho_S(t-\tau)L_H^Q(-\tau)+\mbox{h.c.}d\tau,
\end{equation}
\begin{multline}
\dot{R}(t)=\int_0^t \gamma^*(\tau)L_Q\partial_t\rho_S(t-\tau)L_H^Q(-\tau)+\mbox{h.c.}d\tau+\\
\gamma^*(t)L_Q\rho_S^0L_H^{Q\dagger}(-t)+\mbox{h.c.},
\end{multline}
\begin{equation}
\dot{R}(0)=(\gamma^*(0)+\gamma(0))L_Q\rho_S^0L_Q^\dagger=0, \quad \forall \rho_S^0 \in \mathcal{D}(\mathcal{H}_S).
\end{equation}
Therefore, $L_Q=0$. The S and P blocks are thus reduced to:
\begin{multline}
S(t)=-i[H_S,\rho_S]\\
+\int_0^t \gamma^*(t-\tau)[L_S\rho_S(\tau),L_H^{S\dagger}(\tau-t)]+\mbox{h.c.}d\tau,
\end{multline}
\begin{multline}
P(t)=i\rho_SH_P+\int_0^t \gamma^*(t-\tau)L_S \rho_S(\tau)L_H^{Q\dagger}(\tau-t)\\
-\gamma(t-\tau)\rho_S(\tau)L_S^\dagger L_H^P(\tau-t)d\tau.
\end{multline}
Moreover, since $P(0)=i\rho_S^0H_P=0$, the arbitrariness of $\rho_S^0$ indicates that $H_P=0$. It follows that $H$ must have a block diagonal structure, thus leading to the explicit form of $L_H(t)$:
\begin{equation}
L_H(t)=\left(
     \begin{array}{cc}
       e^{iH_St}L_Se^{-iH_St} &e^{iH_St} L_Pe^{-iH_Rt}\\
       0 & e^{iH_Rt}L_Re^{-iH_Rt}\\
     \end{array}
   \right).
\end{equation}
This structure implies that $L_H^Q(t)=0$, which further reduces the P block to:
\begin{equation}
P(t)=-\int_0^t \gamma(t-\tau)\rho_S(\tau,\rho_S^0)L_S^{\dagger}L_H^P(\tau-t)d\tau \equiv 0.
\end{equation}
Necessity is thus proved.

\textbf{Sufficiency}.

Suppose that conditions \textit{(i)}, \textit{(ii)} and \textit{(iii)} are satisfied. Direct computation yields the following  integro-differential equations for sub-blocks of the state density matrix:
\begin{multline}
\dot{\rho}_S(t)=-i[H_S,\rho_S]\\
+\int_0^t \gamma^{*}(t-\tau)\{[L_S\rho_S(\tau)+L_P\rho_P^{\dagger}(\tau),L_H^{S\dagger}(\tau-t)]\\
+(L_S\rho_P(\tau)+L_P\rho_R(\tau))L_H^{P\dagger}(\tau-t)\}+\mbox{h.c.}d\tau,
\end{multline}
\begin{multline}
\dot{\rho}_P(t)=-i(H_S\rho_P-\rho_PH_R)\\
+\int_0^t\gamma^{*}(t-\tau)[(L_S\rho_P(\tau)+L_P\rho_R(\tau))L_H^{R\dagger}(\tau-t)\\
-L_H^{S\dagger}(\tau-t)(L_S\rho_P(\tau)+L_P\rho_R(\tau))]\\
+\gamma(t-\tau)[(L_H^S(\tau-t)\rho_P(\tau)+L_H^P(\tau-t)\rho_R(\tau))L_R^{\dagger}\\
-(\rho_S(\tau)L_S^{\dagger}+\rho_R(\tau)L_P^{\dagger})L_H^P(\tau-t)-\rho_R(\tau)L_R^{\dagger}L_H^R(\tau-t)]d\tau,
\end{multline}
\begin{multline}
\dot{\rho}_R(t)=-i[H_R,\rho_R]\\
+\int_0^t \gamma^{*}(t-\tau)\{[L_R\rho_R(\tau),L_H^{R\dagger}(\tau-t)]\\
-L_H^{P\dagger}(\tau-t)(L_S\rho_P(\tau)+L_P\rho_R(\tau))\}+\mbox{h.c.}d\tau.
\end{multline}
It suffices to verify that $\rho(t;\rho_S^0)$, $\rho_P(t)\equiv0$, and $\rho_R(t)\equiv0$ are solutions of (18), (19), and (20). It is clear that (18) and (20) are satisfied, while (19) leads to:
\begin{equation}
-\int_0^t \gamma(t-\tau)\rho_S(\tau;\rho_S^0)L_S^{\dagger}L_H^P(\tau-t)d\tau=0
\end{equation}
which is satisfied because of condition \textit{(iii)}. This completes the proof of sufficiency.
\end{proof}
Although the conditions given in Theorem 1 are necessary and sufficient, condition \textit{(iii)} may be difficult to verify for systems with high dimensions. Therefore, some useful necessary (not sufficient) and sufficient (not necessary) conditions are provided.
\newtheorem{mycol}[corollary]{Corollary}
\begin{mycol}
Consider the following conditions (iv) and (v):

(iv) $L_S^{\dagger}L_P=0$;

(v)  $[L_S^{\dagger},H_S]=0$.
\begin{flushleft}
(iv) is necessary for $\mathcal{H}_S$ to be invariant. (i), (ii), (iv) and (v) are sufficient for subspace invariance.
\end{flushleft}
\end{mycol}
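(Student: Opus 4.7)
The plan is to lean entirely on Theorem 1: conditions (i) and (ii) appear in both statements, so all that is needed is to bridge the integral identity in (iii) with the simpler algebraic conditions (iv) and (v). The relevant observation is that the integrand in (iii) carries the single factor $L_S^{\dagger}L_H^P(\tau-t)$, which, using the block-diagonal form of $H$ from (i) together with the explicit expression for $L_H$ worked out in the proof of Theorem 1, equals $L_S^{\dagger}e^{iH_S(\tau-t)}L_Pe^{-iH_R(\tau-t)}$. Both halves of the corollary then reduce to manipulating this single factor.

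For necessity of (iv), I would differentiate condition (iii) in $t$ and evaluate at $t=0$. Writing $g(t,\tau):=\gamma(t-\tau)\rho_S(\tau;\rho_S^0)L_S^{\dagger}L_H^P(\tau-t)$, Leibniz's rule gives $\frac{d}{dt}\int_0^t g(t,\tau)\,d\tau\big|_{t=0}=g(0,0)=\gamma(0)\rho_S^0 L_S^{\dagger}L_P$, because the residual interior integral vanishes at $t=0$. Since condition (iii) forces this derivative to equal zero for every $\rho_S^0\in\mathcal{D}(\mathcal{H}_S)$ and $\gamma(0)\neq 0$ by standing assumption, choosing $\rho_S^0=I/\dim\mathcal{H}_S$ (or invoking the fact that $\mathcal{D}(\mathcal{H}_S)$ contains a basis of the Hermitian operators on $\mathcal{H}_S$) delivers $L_S^{\dagger}L_P=0$.

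For sufficiency, I would verify that (i), (ii), (iv), (v) force condition (iii), after which subspace invariance is immediate from Theorem 1. Condition (v) means $L_S^{\dagger}$ commutes with every $e^{iH_Ss}$, so $L_S^{\dagger}L_H^P(\tau-t)=e^{iH_S(\tau-t)}(L_S^{\dagger}L_P)e^{-iH_R(\tau-t)}$, which vanishes by (iv). The integrand in (iii) is identically zero and the integral identity holds trivially.

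I do not expect any significant obstacle: the argument is essentially algebraic, requiring only a single Leibniz differentiation at $t=0$ and one commutation of $L_S^{\dagger}$ through a matrix exponential. The one mild subtlety is that $\mathcal{D}(\mathcal{H}_S)$ is not a linear subspace, so the step from $\rho_S^0 L_S^{\dagger}L_P=0$ on all density matrices to the operator identity $L_S^{\dagger}L_P=0$ should be made explicit, for instance by testing at the maximally mixed state.
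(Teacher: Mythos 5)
Your proposal is correct and follows essentially the same route as the paper: necessity of (iv) by differentiating the integral identity of condition (iii) at $t=0$ (the paper does the identical computation via $\dot{P}(0)=-\gamma(0)\rho_S^0L_S^{\dagger}L_P=0$), and sufficiency by using $[L_S^{\dagger},H_S]=0$ to commute $L_S^{\dagger}$ through the exponential so that $L_S^{\dagger}L_H^P(\tau-t)$ vanishes by (iv), which makes (iii) hold trivially and invokes Theorem 1. Your extra care in justifying $\rho_S^0L_S^{\dagger}L_P=0\Rightarrow L_S^{\dagger}L_P=0$ via an invertible (maximally mixed) $\rho_S^0$, and your correct right-hand factor $e^{-iH_R(\tau-t)}$ in $L_H^P$, are minor refinements of the same argument.
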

\begin{proof}
We begin by showing that \textit{(iv)} is necessary.

Calculating the derivative of $P(t)$ at $t=0$ yields:
\begin{equation*}
\dot{P}(0)=-\gamma(0)\rho_S^0 L_S^{\dagger}L_P=0,
\end{equation*}
which holds for arbitrary $\rho_S^0$. This implies that $L_S^{\dagger}L_P=0$.

For the sufficiency of \textit{(i)}, \textit{(ii)}, \textit{(iv)} and \textit{(v)}, we prove that \textit{(iv)} and \textit{(v)} leads to \textit{(iii)}.

This is clear since:
\begin{align*}
&L_S^{\dagger}L_H^P(\tau-t)\\
&=L_S^{\dagger}e^{iH_S(\tau-t)}L_Pe^{-iH_S(\tau-t)}\\
&=e^{iH_S(\tau-t)}L_S^{\dagger}L_Pe^{-iH_S(\tau-t)}\\
&=0.
\end{align*}
Thus ends the proof of Corollary 1.
\end{proof}

After defining and characterizing invariant subspaces, it can be seen that each of them determines an ``invariant set" in $\mathcal{D}(\mathcal{H}_I)$, which is the set of density matrices that are "compressed" within the top left S block. If the initial state locates in that set, all future states will remain in it as long as the system evolves under (\ref{model}). Invariant subspaces thus correspond to preserved quantum information. Moreover, they pave the way for subspace attractivity, which will be discussed in the next section.

\section{Subspace Attractivity}
Building on the analysis of subspace invariance in the previous section, we proceed to define and characterize attractive subspaces. It can also be checked that this is equivalent to the definition in \cite{Ticozzi2008Quantum}.
\newtheorem{mydef1}[definition]{Definition}
\begin{mydef1}[Subspace Attractivity]

Let $\rho(t)$ evolve under (1). If 
\[
{\lim_{t\to +\infty}}(\rho(t)-\left(
     \begin{array}{cc}
       \rho_S(t) & 0\\
       0 & 0\\
     \end{array}
   \right))=0
\]
for all initial states in $\mathcal{D}(\mathcal{H}_I)$, and $\mathcal{H}_S$ is invariant, then $\mathcal{H}_S$ is said to be an attractive subspace.
\end{mydef1}

It is straightforward from this definition that an attractive subspace $\mathcal{H}_S$ is related to an invariant and  attractive set of density matrices.

Suppose $[L_S^{\dagger},H_S]=0$. Then (20) reduces to the following equation considering real $\gamma$ functions.
\begin{multline}
\dot{\rho}_R(t)=-i[H_R,\rho_R]\\
+\int_0^t \gamma(t-\tau)\{[L_R\rho_R(\tau),L_H^{R\dagger}(\tau-t)]\\
+[L_H^{R}(\tau-t),\rho_R(\tau)L_R^\dagger]\}d\tau\\
+\int_0^t \gamma(t-\tau)(-L_H^{P\dagger}(\tau-t)L_P\rho_R(\tau)\\
-\rho_R(\tau)L_P^{\dagger}L_H^{P}(\tau-t))d\tau.
\end{multline}
This implies that the evolution of $\rho_R$ is independent, as opposed to (20), where it also relies on $\rho_P$.

We cast (22) into superoperator form:
\begin{equation}
\dot{\rho}_R=\mathcal{A}\rho_R+\int_0^t \mathcal{B}(t-\tau)\rho_R(\tau)d\tau+\int_0^t \mathcal{K}(t-\tau)\rho_R(\tau)d\tau,
\end{equation}
where
\begin{align*}
&\mathcal{A}[\cdot]=-i[H_R,\cdot],\\
&\mathcal{B}(t)[\cdot]=\gamma(t)\{[L_R\cdot,L_H^{R\dagger}(-t)]+[L_H^{R}(-t),\cdot L_R^\dagger]\},\\
&\mathcal{K}(t)[\cdot]=-\gamma(t)(L_H^{P\dagger}(-t)L_P\cdot+\cdot L_P^\dagger L_H^{P}(-t)).
\end{align*}
Before presenting the main theorem of this section, we shall first prove a useful lemma.
\newtheorem{mylem}[lemma]{Lemma}
\begin{mylem}
Let $f(t)$ be a continuously differentiable function on $[0,\infty)$, and $f(t)\geq 0$. If $\dot{f}(t)\leq \phi(t)$, where $\phi(t)\geq 0$ and $\phi \in L^1 [0,\infty)$, then $f(t)$ must have a finite limit when $t$ tends to infinity.
\end{mylem}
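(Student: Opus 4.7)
The plan is to control $f$ by splitting its derivative into positive and negative parts and using the hypotheses $\phi \in L^1[0,\infty)$ and $f \geq 0$ to bound each piece separately. Concretely, write $\dot{f}(t) = \dot{f}^+(t) - \dot{f}^-(t)$ where $\dot{f}^+ = \max\{\dot{f},0\}$ and $\dot{f}^- = \max\{-\dot{f},0\}$ are the nonnegative and nonpositive parts of $\dot{f}$. Since $\dot{f}$ is continuous, both parts are measurable (in fact continuous except possibly where $\dot{f}$ changes sign), and the fundamental theorem of calculus gives
\begin{equation*}
f(t) = f(0) + \int_0^t \dot{f}^+(s)\,ds - \int_0^t \dot{f}^-(s)\,ds.
\end{equation*}

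Next I would argue that each of the two integrals has a finite limit as $t \to \infty$. For the positive part, the hypothesis $\dot{f}(t) \leq \phi(t)$ together with $\phi \geq 0$ gives $0 \leq \dot{f}^+(t) \leq \phi(t)$, so $\dot{f}^+ \in L^1[0,\infty)$; hence $\int_0^t \dot{f}^+(s)\,ds$ is monotone nondecreasing and bounded above by $\|\phi\|_{L^1}$, so it converges to a finite limit $I^+$. For the negative part, $\int_0^t \dot{f}^-(s)\,ds$ is also monotone nondecreasing in $t$; rearranging the identity above and using $f(t) \geq 0$ yields
\begin{equation*}
\int_0^t \dot{f}^-(s)\,ds = f(0) - f(t) + \int_0^t \dot{f}^+(s)\,ds \leq f(0) + I^+,
\end{equation*}
so this monotone integral is bounded above and therefore converges to a finite limit $I^-$.

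Combining the two limits, $f(t) = f(0) + \int_0^t \dot{f}^+ - \int_0^t \dot{f}^-$ converges to $f(0) + I^+ - I^-$, which is the desired finite limit.

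The only mildly subtle point, and the step I would expect to be the real content of the argument, is the second estimate: everything else is routine monotone-convergence bookkeeping, but pinning down that $\dot{f}^-$ is integrable requires combining the nonnegativity of $f$ with the integrability of $\dot{f}^+$ rather than exploiting any direct pointwise bound on $\dot{f}^-$ (none is available, since $\dot{f}$ can be arbitrarily negative). Once that observation is in place the proof essentially writes itself.
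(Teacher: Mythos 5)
Your proof is correct, and it takes a genuinely different route from the paper. You decompose $\dot{f}=\dot{f}^{+}-\dot{f}^{-}$, observe that $0\leq \dot{f}^{+}\leq \phi$ forces $\dot{f}^{+}\in L^{1}[0,\infty)$, and then use the nonnegativity of $f$ in the identity $\int_0^t \dot{f}^{-}=f(0)-f(t)+\int_0^t \dot{f}^{+}$ to conclude that $\dot{f}^{-}$ is integrable as well, so $f$ converges as the difference of two bounded monotone integrals; this is exactly the right leverage of the two hypotheses, and as a by-product it shows $\dot{f}\in L^{1}$, i.e.\ $f$ has bounded variation on $[0,\infty)$, which is slightly stronger than the stated conclusion. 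The paper instead argues by cases on whether $\dot{f}$ has finitely or infinitely many zeros: in the finite case it uses monotonicity past the last zero (descending and bounded below by $0$, or ascending and bounded above by $f(0)+\int_0^\infty\phi$), and in the infinite case it runs a contradiction argument with $\varlimsup f>\varliminf f$, extracting subsequences of peaks and valleys and using the one-sided estimate $f(t)-f(s)\leq\int_s^t\phi(\tau)\,d\tau\to 0$ to rule out persistent oscillation of fixed amplitude. Your argument is shorter, avoids the case split and the peak/valley bookkeeping (which in the paper needs some care to set up rigorously), and is entirely constructive rather than by contradiction; the paper's approach, on the other hand, works directly with the one-sided derivative bound without ever isolating the negative part of $\dot{f}$, which is the same mechanism one would use if one only had an upper Dini derivative rather than a genuine $C^{1}$ function. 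Either proof supports the use of the lemma in Theorem 2.
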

\begin{proof}
We first prove that the statement is correct when $\dot{f}(t)$ has a finite number of zero points.

Let $t_{\mbox{\footnotesize{max}}}$ be the largest zero point. On $(t_{\mbox{\footnotesize{max}}},\infty)$, $\dot{f}(t)$ must either remain negative or positive. If it remains negative, then $f(t)$ must have a limit since it is descending and lower bounded by 0 on $(t_{\mbox{\footnotesize{max}}},\infty)$. If it remains positive, consider the following inequalities.
\begin{eqnarray}
f(t)&=f(0)+\int_0^t \dot{f}(s)ds\nonumber\\
    &\leq f(0)+\int_0^t \phi(s)ds\nonumber\\
    &\leq f(0)+\int_0^{\infty}\phi(s)ds.\nonumber
\end{eqnarray}
Because $\phi \in L^1 [0,\infty)$, $f(t)$ is upper bounded. It thus has a limit since it is increasing on $(t_{\mbox{\footnotesize{max}}},\infty)$.

We then proceed to consider the case where $\dot{f}(t)$ has an infinite number of zero points. The statement can be proved by contradiction. Suppose that $f(t)$ has no limits. Then we have:
\[
U=\varlimsup_{t\to +\infty} f(t)>\varliminf_{t\to +\infty} f(t)=L.
\]
Denote the sequence of peaks by $\{u_n\}_{n=1}^\infty$, and the sequence of valleys by $\{l_n\}_{n=1}^\infty$. The definition of limit superior and limit inferior implies that:
\[
U=\varlimsup_{t\to +\infty} f(t)=\varlimsup_{n\to +\infty}u_n,
\]
\[
L=\varliminf_{t\to +\infty} f(t)=\varliminf_{n\to +\infty}l_n.
\]
The equivalent definition of superior and inferior limits indicates that there exist
\[
\{u_{n_k}\}_{k=1}^\infty \subset \{u_n\}_{n=1}^\infty, \quad \lim_{k\to +\infty}u_{n_k}=U;
\]
\[
\{l_{n_k}\}_{k=1}^\infty \subset \{l_n\}_{n=1}^\infty, \quad \lim_{k\to +\infty}l_{n_k}=L.
\]
We also have that $\forall t,s>0$, $t\geq s$,
\[
f(t)-f(s)=\int_s^t \dot{f}(\tau)d\tau \leq \int_s^t\phi(\tau)d\tau.
\]

Since $\phi \in L^1 [0,\infty)$, $f(t)-f(s)$ tends to 0 when $t$ and $s$ tend to infinity. However, if we pick $\{u_{n_{k_i}}\}_{i=1}^\infty \subset \{u_{n_k}\}_{k=1}^\infty$, s.t. $l_{n_i}\leq u_{n_{k_i}}$, $\forall i\in N^+$, we have:
\[
\lim_{i\to +\infty}(f(u_{n_{k_i}})-f(l_{n_i}))=U-L>0.
\]
This results in a contradiction. Therefore, $f(t)$ must have a limit.
\end{proof}

We are now in the position to present the main result on subspace attractivity.
\newtheorem{mythm1}[theorem]{Theorem}
\begin{mythm1}[Subspace Attractivity]
If $[L_S^\dagger,H_S]=0$, and the matrix
\[
-2\gamma(0)L_P^{\dagger}L_P+\int_t^\infty \Vert \Omega(\tau,t)\Vert d\tau \cdot I
\]
is negative definite for all $t\geq 0$, then $\mathcal{H}_S$ is an attractive subspace. $\Omega(\cdot,\cdot)$ is a two-variable superoperator expressed as:
\begin{multline}
\Omega(t,s)=\mathcal{K}(t-s)-\partial_s\mathcal{K}(t-s)-\mathcal{K}(t-s)\mathcal{A}\\
-\int_s^t\mathcal{K}(t-u)(\mathcal{K}(u-s)+\mathcal{B}(u-s))du.
\end{multline}
\end{mythm1}
\begin{proof}
Since $\rho(t)$ is always positive, it suffices to show that the origin is asymptotically stable (all solutions tend to 0) for (22) and (23). The Variation of Parameters technique of integro-differential equations; see \cite{book:1409764}, allow us to swap (23) into an equivalent equation:
\begin{equation}
\dot{\sigma}=\mathcal{N}\sigma+\int_0^t \mathcal{L}(t,s)\sigma(s)ds+\mathcal{K}(t)\sigma_0,
\end{equation}
where
\[
\mathcal{N}=\mathcal{A}-\mathcal{K}(0),
\]
and
\[
\mathcal{L}=\Omega(t,s)+\mathcal{B}(t,s).
\]
Consider the following Lyapunov functional:
\begin{equation}
V(t,\sigma(\cdot))=tr(\sigma)+\int_0^t\int_t^\infty \Vert\Omega(\tau,s)\Vert d\tau tr(\sigma(s))ds,
\end{equation}
where $\Vert \cdot \Vert$ denotes the norm of superoperators on the Banach space of all Hermitian matrices. Taking the derivative of this functional w.r.t $t$ yields:
\begin{multline}
\dot{V}(t,\sigma(\cdot))=\text{tr}((\mathcal{A}-\mathcal{K}(0))\sigma)+\text{tr}(\int_0^t (\Omega)(t,s)\sigma(s)ds)\\+\text{tr}(\int_0^t (\mathcal{B})(t,s)\sigma(s)ds)
+\text{tr}(\mathcal{K}(t)\sigma_0)\\-\int_0^t \Vert \Omega(t,s) \Vert \text{tr}(\sigma(s))ds+
\int_t^\infty \Vert \Omega(\tau,t) \Vert d\tau \text{tr}(\sigma).
\end{multline}
Using the fact that $\text{tr}(\mathcal{A}[\cdot])=0$ and $\text{tr}(\mathcal{B}(t,s)[\cdot])=0$, and applying the norm inequality we obtain:
\begin{multline*}
\dot{V}(t,\sigma(\cdot)) \leq -\text{tr}(\mathcal{K}(0)\sigma)+\text{tr}(\int_0^t\Vert \Omega(t,s) \Vert \sigma(s)ds )\\
+\text{tr}(\mathcal{K}(t)\sigma_0)
-\int_0^t \Vert \Omega(t,s) \Vert \text{tr}(\sigma(s))ds  \\
+\int_t^\infty \Vert \Omega(\tau,t) \Vert d\tau \text{tr}(\sigma)\\
=\text{tr}([-2\gamma(0)L_P^{\dagger}L_P+\int_t^\infty \Vert \Omega(\tau,t) \Vert d\tau \cdot I]\sigma)+\text{tr}(\mathcal{K}(t)\sigma_0).\nonumber
\end{multline*}
The negative definiteness of \[-2\gamma(0)L_P^{\dagger}L_P+\int_t^\infty \Vert \Omega(\tau,t)\Vert d\tau \cdot I\] implies that $\dot{V}(t,\sigma(\cdot))\leq \text{tr}(\mathcal{K}(t)\sigma_0)$, which is a scalar function in $L^1[0,\infty)$ because $\gamma \in L^1[0,\infty)$ and all other time dependent terms are oscillatory and bounded. 

Therefore, by applying Lemma 1, we know that the Lyapunov functional (26) must have a finite limit when $t$ tends to infinity. The natural boundedness of density matrices implies that the second derivative of $V$ w.r.t $t$ is also bounded. The Barbalat's lemma thus tells us that $\dot{V}$ tends to zero. This leads to the fact that
\[
\lim_{t\to +\infty}\text{tr}([-2\gamma(0)L_P^{\dagger}L_P+\int_t^\infty \Vert \Omega(\tau,t) \Vert d\tau \cdot I]\sigma)=0
\]
because $\text{tr}(\mathcal{K}(t)\sigma_0)$ tends to 0. The negative definiteness again says that $\sigma(t) \to 0$, which completes the proof. 
\end{proof}
\newcounter{remark}
\newtheorem{myrem}[remark]{Remark}
\begin{myrem}
In an attractive subspace, the invariant set determined by the subspace is autonomously stabilized for all initial states. This is interesting for QIP applications, where quantum information may be manipulated and free from decoherence. If the attractive subspace is only one-dimensional, then the set shrinks to a single pure state that spans the subspace. Tasks such as qubit initialization, cooling and entanglement generation can be realized if we choose a proper subspace decomposition.
\end{myrem}

\begin{figure}
\includegraphics[width=0.5\textwidth]{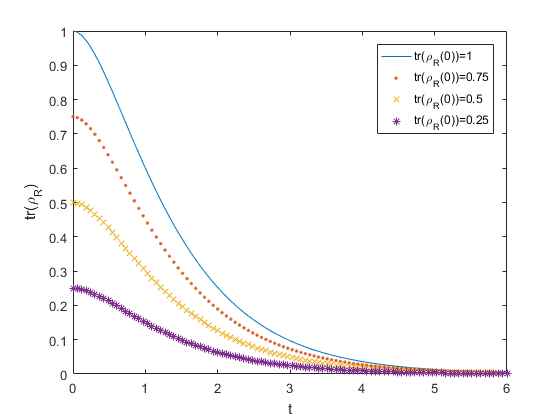}
\caption{Four different initial values for $\text{tr}(\rho_R)$: 1, 0.75, 0.5, 0.25 are chosen. Simulation results show that  $\text{tr}(\rho_R)$ vanishes as time elapses, demonstrating subspace attractivity. }\label{section}
 \end{figure}

\section{Numerical Example and Simulation}

In this section, an example with numerical simulation is presented to illustrate the results.

Consider a three-level system with the following parameters, where we have set $\hbar=1$. The kernel function is $\gamma(t)=e^{-3t}$ and
\[
H=\left(
\begin{array}{ccc}
1/2 & 0 & 0\\
0   &-1/2& 0\\
0 & 0 & -1/2\\ 
\end{array}
\right),\quad
L=\left(
\begin{array}{ccc}
1 & 0 & 0\\
0 & 0 & 1\\
0 & 0 & 0\\ 
\end{array}
\right).
\]

The S block corresponds to the 2$\times$2 block on the top left. The corresponding subspace $\mathcal{H}_S$ is thus 2-dimensional. Its attractivity will be demonstrated via simulation. It can be verified directly that the matrices satisfy sufficient conditions for invariance proposed in Section III. Direct computation yields:

\begin{align*}
&-2\gamma(0)L_P^{\dagger}L_P+\int_t^\infty \Vert \Omega(\tau,t) \Vert d\tau\\
&\leq -2+\int_0^\infty e^{-3u}\vert 4-4u \vert du\\
&\leq -\frac{2}{9}\\
& <0.
\end{align*}

Therefore, sufficient conditions for attractivity are also met. We plot $\text{tr}(\rho_R)$ w.r.t time in FIG.1, choosing 4 different initial values. FIG.1 shows that they converge to 0, meaning that $\mathcal{H}_S$ is attractive.

%

\section{Conclusion and Future Work}
We have extended the analysis of subspace invariance and attractivity to a class of non-Markovian quantum systems. By doing so, we attempt to reach deeper than only to model non-Markovian systems: we have also set foot on investigating their asymptotic dynamical properties, which is among the first few attempts in literature to our knowledge. In future works, other non-Markovian models with potential QIP applications will be investigated. It is also worthwhile investigating whether non-Markovian quantum systems may have other undefined dynamical properties compared with their Markovian counterpart, which only makes future studies much more intriguing.


%



\begin{acknowledgments}
This work has been supported by National Natural Science Foundation (NNSF) of China under Grant 61603040.
\end{acknowledgments}

\nocite{Shabani2005Completely}
\nocite{Pechukas1994Reduced}
\nocite{Pan2016Stabilizing}
\nocite{Vacchini2016Generalized}
\nocite{Xue2012Decoherence}
\nocite{Vacchini2013Non}
\bibliography{zsk1}

\end{document}